\newcommand{\figurename}[1]{Fig.{#1}}
\pgfplotsset{compat=newest}
\pgfplotsset{plot coordinates/math parser=false}
\newlength\figureheight
\newlength\figurewidth
\newtheorem{lemma}{Lemma}
\newtheorem{assumption}{Assumption}
\newtheorem{proposition}{Proposition}
\newtheorem{remark}{Remark}
\newcommand{\tablename}[1]{Table{#1}}
\title{\LARGE \bf
Innovation diffusion dynamics toward long-term behavioral shifts
}
\author{Lisa Piccinin$^{1}$, Valentina Breschi$^{2}$, Chiara Ravazzi$^{3}$, Fabrizio Dabbene$^{3}$ and Mara Tanelli$^{1}$% <-this % stops a space
\thanks{This work is partially supported by PRIN project {\em{TECHIE: A control and network-based approach for fostering the adoption of new technologies in the ecological transition}}. Cod. 2022KPHA24 CUP: D53D23001320006.}% <-this % stops a space
\thanks{$^{1}$Lisa Piccinin and Mara Tanelli are with Politecnico di Milano, 20133, Milan, Italy. 
        {\tt\small name.surname@polimi.it}}%
\thanks{$^{2}$Valentina Breschi is with the Eindhoven University of Technology, 5600 MB, Eindhoven, The Netherlands.
        {\tt\small  v.breschi@tue.nl}}%
\thanks{$^{3}$Chiara Ravazzi and Fabrizio Dabbene are with the Institute of Electronics, Computer and Telecommunication Engineering, National Research Council of Italy (CNR-IEIIT), 10129, Turin, Italy. 
        {\tt\small name.surname@cnr.it}%
}}
\begin{document}

\maketitle
\thispagestyle{empty}
\pagestyle{empty}

%%%%%%%%%%%%%%%%%%%%%%%%%%%%%%%%%%%%%%%%%%%%%%%%%%%%%%%%%%%%%%%%%%%%%%%%%%%%%%%%
\begin{abstract}
Sustainable technologies and services can play a pivotal role in the transition to \textquotedblleft greener\textquotedblright \ habits. Their widespread adoption is thus crucial, and understanding how to foster this phenomenon in a systematic way could have a major impact on our future. With this in mind, in this work we propose an extension of the Friedkin-Johnsen opinion dynamics model toward characterizing the long-term impact of (structural) fostering policies. We then propose alternative nudging strategies that target a trade-off between widespread adoption and investments under budget constraints, showing the impact of our modeling and design choices on inclination shifts over a set of numerical tests. 
\end{abstract}

%%%%%%%%%%%%%%%%%%%%%%%%%%%%%%%%%%%%%%%%%%%%%%%%%%%%%%%%%%%%%%%%%%%%%%%%%%%%%%%%
\section{Introduction}
Promoting the adoption of new green technologies and services has become increasingly important, especially as society faces the pressing challenges of climate change~\cite{c1}. To achieve widespread adoption, policymakers and stakeholders can strategically rely on \emph{on-off interventions} (e.g., first-time user discounts for sharing mobility services), promoting first-hand experiences of the benefits of a service or a technology that can, nevertheless, have only a short-term impact on individual habits~\cite{c3}, and \emph{systemic policies} (e.g., building dedicated parking spaces for sharing vehicles), progressively shaping available infrastructures or regulatory system to facilitate the acceptance of new (green) solutions in the long run~\cite{c4,c20}. However, the effectiveness of these strategies is often hampered (and, thus, the widespread adoption of these green solutions slowed) by changes that the use of new technologies can induce in established habits due to the natural tendency of people to resist changes in their long-standing routines~\cite{c2}, as well as social dictates~\cite{c19}. 

In this context, opinion dynamics and control theory can
be crucial in systematically characterizing and harnessing
the interplay between individual needs, social dictates, and
interventions on personal choices~\cite{c19}. This approach allows benchmarking existing policies and proposing alternative ones, making control strategies key tools in designing human-centered policies to shape a more sustainable
future~\cite{c5}.   

\paragraph*{Modeling individual disposition to adoption} Throughout the years, several mathematical models have been proposed to capture the intricate mechanisms of opinion formation over social networks (see the review in~\cite{c6}). Among them and laying its foundations on~\cite{c7}, the \emph{DeGroot} model (see~\cite{c8}) postulates that personal preferences evolve based on a weighted averaging process dictated by an \textit{influence network}. Accordingly, individuals adjust their inclinations by integrating their beliefs with their neighbors' (weighted) opinions. Several extensions of this framework have then been proposed to explore how diverse social phenomena arise by incorporating more complex dynamics into how individuals learn and assimilate opinions. In particular, the \emph{Friedkin-Johnsen} (FJ) model~\cite{c9} introduces the concept of \textit{stubborn agents}, i.e., individuals inherently less inclined to be influenced, thus describing also the effect of personal biases along with that of social influence in opinion formation.
While characterizing how opinions spread in a social context in open-loop, traditional models have often overlooked the impact of external factors on opinion dynamics. Nonetheless, understanding and characterizing the interplay between opinions and external interventions (apart from social interactions) is key to ensuring the design of efficient and cost-effective interventions, especially in the realm of innovation diffusion.
% At the same time, these traditional models often overlook external factors (e.g., policy interventions) that shape opinion dynamics. Recognizing the impact of such factors, external to both the individual and the social network, is nonetheless crucial for leveraging them in policy evaluation and design.
% Among the few attempts made in this direction,~\cite{c12} builds on~\cite{c10} and~\cite{c11}, extending the FJ model to incorporate the impact of a recommendation system on opinion dynamics by making it an agent in the network. Similarly, \cite{c13} relies on a DeGroot-like model, extended by including a selfish agent that is nothing but the policymaker. Instead, \cite{c14} extends the FJ model to characterize further the impact of deterministic and stochastic factors that shape individual dispositions, assuming that they modify the stubbornness of agents.
\paragraph*{Fostering changes in individual attitudes} Traditional opinion dynamics models have often been used to address influence maximization problems, that is, to select an initial set of users to influence the largest number of agents in the network (see, e.g.,~\cite{c23,c24,c25}). 
At the same time, as these models do not explicitly describe the impact of policies on opinion formation, they could not be used for interventions' evaluation and design. Only a few recent studies (see~\cite{c13,c14,c12,c26}) have explored models that embed the effects of external interventions into opinion dynamics to design policies to nudge changes in individual inclinations systematically. Specifically,~\cite{c13}, relying on a DeGroot-like model, assumes that the social network features a \textquotedblleft manipulating agent\textquotedblright \ having access to the inclination of others and analyzes the effect of the manipulating agent using a \emph{proportional-integral} (PI) controller to steer the opinions of others to the desired target. In a similar spirit, but focusing on recommendation systems,~\cite{c12}, builds on~\cite{c10} and~\cite{c11}, extending the FJ model to incorporate the impact of a recommendation system on opinion dynamics. In detail, the recommender is modeled as an additional node in the network and employs model predictive control (MPC) to maximize user engagement while providing indications on the extent to which the recommender can alter users' opinions. Not postulating access to individual opinions,~\cite{c14} assumes instead that the policymaker can nudge individuals by acting on their inherent bias by relying on an estimate of their inclinations, which is used within a centralized MPC scheme to design nudging policies. The work in~\cite{c14} is extended and enhanced in~\cite{c26}, where the evolution of individual inclinations is studied under both controlled interventions and uncontrolled external events, comparing various methods for policy design and their impact on the trade-off between policy costs and widespread diffusion of an opinion.

\paragraph*{Our contributions and goals} 

% Despite accounting for the effect of external interventions, the opinion dynamics models adopted in~\cite{c12, c14, c26} suffer from a key shortcoming when looking at describing (and then fostering) innovation diffusion
% with limited resources. Indeed, all these models tend to asymptotically \textquotedblleft forget\textquotedblright \ changes in inclinations induced by nudging policies, with opinions asymptotically converging to what they would be if policymakers did not implement fostering interventions at all. While potentially suitable to characterize the effect of on-off incentives (unlikely to change one's habits disruptively) and design them, these models can thus not capture possible long-lasting effects of systemic policies and, thus, are likely not suited for benchmarking and designing such systemic policy actions either.
Despite considering the effect of external interventions, the opinion dynamics models adopted in~\cite{c12, c14, c26} were designed with a focus on on-off incentives. Indeed, these models tend to asymptotically \textquotedblleft forget\textquotedblright \ changes in inclinations induced by nudging policies, with opinions asymptotically converging to their open-loop values. Therefore, they do not allow to model the impact of systemic policies, making it unsuitable for assessing the potential long-term impact of such policies.
To overcome this limitation, we propose an extension of the FJ model with controlled inputs having saturated integral dynamics. This choice allows us to preserve the equivalence with the standard FJ model without external factors driving opinion dynamics apart from social interactions, while enabling us to describe long-term shifts in individual attitudes, mimicking the impact of (well-conceived) structural policies. In a similar spirit to~\cite{c14} and {~\cite{c26}, but assuming access only to individuals' average inclinations over time, we then use the proposed model to devise different strategies for the design of (centralized) fostering policies aimed at achieving an \textquotedblleft optimal\textquotedblright \ \emph{trade-off} between the average adoption of a new technology/service (referred to as \emph{social benefit}) and the investment made (i.e., the policy's \emph{cost}), under budget constraints. This choice allows us to consider a constrained allocation problem while allowing us to (at least empirically) analyze the impact of budget-constrained inputs in closed loop.
\paragraph*{Outline} Section~\ref{sec:model} introduces the proposed opinion dynamics model, whose properties are then analyzed in Section~\ref{sec:prop}. This model is used in Section~\ref{sec:policy_design} to introduce budget-constrained policy design strategies, whose impact on individual opinions is analyzed through a numerical example in Section~\ref{sec:num_example}. The paper ends with some final remarks and directions for future work. 
\paragraph*{Notation} We denote with $\mathbb{N}$, $\mathbb{N}_{0}$ and $\mathbb{R}_{+}$, the set of positive natural numbers, the set of natural numbers including zero, and the set of positive real numbers, respectively. Given any vector $ x \in \mathbb{R}^n $ and matrix $ A \in \mathbb{R}^{m \times n}$, their transposes are denoted as $x^{\top}$ and $A^{\top}$, respectively, while the inverse of $B \in \mathbb{R}^{n \times n}$ is given by $B^{-1}$ and its spectral radius is denoted as $\rho(B)$. The positive (non-negative) definite matrix $A$ is indicated as $A \succ 0$ ($A \succeq 0$). Meanwhile, $\|x\|_{2}$ and $\|A\|_{2}$ denote their 2-norms, $\|x\|_{B}^{2}=x^{\top} Bx$, $x_i \in \mathbb{R}$ denotes the $i$-th component of $x$ and $a_{ij} \in \mathbb{R}$ indicates the element of $A$ in position $(i,j)$. We denote by $\mathbb{1}$ and $\mathbb{0}$ vectors of ones and zeros (of appropriate dimensions) while we indicate identity matrices with $I$. Given a random vector $x \in \mathbb{R}^n$, $\mathbb{E}[x]$ denotes its expected value. The logical operator \emph{or} is indicated as $\lor$. 

\section{Modeling long-term attitude shifts} \label{sec:model}
Consider an influence network with $N \in \mathbb{N}$ agents that we formally characterize with a directed weighted graph $\mathcal{G}=(\mathcal{V}, \mathcal{E},P)$. Here, $\mathcal{V}$ represents the set of $N$ agents, $\mathcal{E}$ indicates the existence of bonds between them, i.e., agent $w\in\mathcal{V}$ influences agent $v\in\mathcal{V}$ if $(v,w) \in \mathcal{E}$, while the strength of mutual influences is dedicated by the non-negative, row-stochastic matrix $P \in \mathbb{R}^{N \times N}$ satisfying       
\begin{equation}\label{eq:properties_P}
    P_{vw}{\color{black}>0},~\forall (v,w) \in \mathcal{E}, \qquad \sum_{w \in \mathcal{V}}  P_{vw}  =  1,~\forall v \in \mathcal{V}.
\end{equation}
The mutual connections of the agents are further shaped through a diagonal matrix $\Lambda \in [0,1]^{N \times N}$, whose diagonal entries $\lambda_{i} \in [0,1]$ denote the susceptibility of the $i$-th agent to the influence of its peers, with $i=1,\ldots,N$. As common in opinion dynamics \cite{c17}, we make the following technical assumption about these susceptibilities.
\begin{assumption}\label{ass:P}
For every node $v \in\mathcal{V}$, there exists a path from $v\in\mathcal{V}$ to a node $w\in\mathcal{V}$ such that $\lambda_w<1$. 
\end{assumption}
\begin{comment}
This assumption guarantees that all agents can influence each other at all times, eventually only in an indirect way through interactions with neighbors.    
\end{comment}

Agents are set apart by their \emph{inherent biases} $u^\mathrm{o}   \in   [0,1]^{N}$ to the new technology/services, which we assume are the features that nudging policies can modify. In particular, the closer $u^\mathrm{o}_{v} \in [0,1]$ is to $1$, the more inherently well-disposed the $v$-th agent is to the new technology/service. Along with their biases, agents' are also characterized by their latent inclination to adoption at each time instant $t \in \mathbb{N}_{0}$, here assumed to be collected in a (stochastic) state $ {x}(t) \in [0,1]^{N}$, with $ {x}_{v}(t) \in [0,1]$ closer to $1$ indicating a positive attitude of the $v$-th agent toward the technology/service of interest. The evolution of this variable over time is here characterized by the following difference equation:
\begin{subequations}\label{eq:dyn_model}
\begin{equation}\label{eq:dyn_model_x}
    {x}(t+1)\!=\!\Lambda P  {x}(t)+(I\!-\!\Lambda) \max\{\mathbb{0},\!\min\{u(t)\!+\! u^{nc}(t),\!\mathbb{1}\!\}\!\},
\end{equation}
where $u^{nc}(t)$ characterizes the effect of stochastic (short-term) fluctuations in individual inclination to adoption due to external, uncontrollable factors (e.g., public transport strikes or bad weather in the context of bike sharing) on changes in individual inclinations at time $t\in\mathbb{N}_0$ (as in~\cite{c26}). To reflect this modeling assumption, the uncontrollable inputs thus satisfy the following.
\begin{assumption}\label{ass:noise}
  The uncontrollable inputs in the sequence $\{u^{nc}(t)\}_{t\in\mathbb{N}_0}$ are i.i.d. random vectors uniformly distributed within the interval $ [- \delta, +\delta] $, with $0 < \delta < u^{\mathrm{o}}_v$ for all $v \in \mathcal{V}$.
\end{assumption}
Due to the condition on $\delta$, we ultimately assume that $u^{nc}(t)$ causes only (very) slight short-term changes in individual opinions at all $t \in \mathbb{N}_0$, since $|u^{nc}_v(t)|\leq \delta$ for all $t \in \mathbb{N}_0$ and $v \in \mathcal{V}$.

Meanwhile, $u(t)$ encodes the impact of the initial individual bias and the cumulative ones of policy actions enacted by a policymaker or a stakeholder until $t\in\mathbb{N}_0$. Specifically, we describe the evolution of $u(t)$ with the following (simplistic) cumulative dynamics: 
\begin{equation}\label{eq:input_dynamics}
  u(t+1)=
        u(t)+u^{c}(t)
    ,~~\forall t \in \mathbb{N},~~\mathrm{with}~u(0)=u^{\mathrm{o}},
\end{equation}
where $u^{c}(t)$ is a non-negative controlled input modeling actions that the policymaker or stakeholders can undertake (and adjust) over time to nudge a shift in individual preferences toward a new technology/service.
\end{subequations}
\begin{remark}[The validity of \eqref{eq:dyn_model}]
    Including the \textit{max/min} in \eqref{eq:dyn_model_x} ensures that $x(t)$ is well defined $\forall t \in \mathbb{N}_{0}$, i.e., that $x(t) \in [0,1]^{N}$ at all time instants. \hfill $\square$
\end{remark}

It is worth pointing out that, when no controlled policy is deployed to nudge the acceptance of a new technology/service, the latent state's expected value asymptotically coincides with that of the standard FJ model, as formalized in the following lemma.
\begin{lemma}[Control-free mean asymptotic opinions]\label{lemma:steady_free}
    Let Assumption~\ref{ass:P} hold, $x(0) \in [0,1]^{N}$ and $u^{c}(t)=0$ for all $t \in \mathbb{N}_0$. Then, the latent state's expected value satisfies
    \begin{equation}\label{eq:steady_state_free}
    \mu_{\infty}\!=\!\lim_{t \rightarrow \infty} \mu(t)\!=\!\lim_{t \rightarrow \infty} \mathbb{E}[x(t)]=(I-\Lambda P)^{-1}(I-\Lambda)u^{\mathrm{o}},
    \end{equation}
    with $\mu_{\infty} \in [0,1]^{N}$.
\end{lemma}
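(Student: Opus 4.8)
The plan is to eliminate the saturation and the input dynamics so that the state recursion collapses to an affine one, and then to invoke the standard Friedkin--Johnsen spectral argument. First, setting $u^{c}(t)=0$ for all $t$ in \eqref{eq:input_dynamics} freezes the cumulative input at $u(t)\equiv u^{\mathrm{o}}$. I would then argue that the clipping in \eqref{eq:dyn_model_x} is inactive almost surely: Assumption~\ref{ass:noise} gives $u^{\mathrm{o}}_v+u^{nc}_v(t)\ge u^{\mathrm{o}}_v-\delta>0$, removing the $\max\{\mathbb{0},\cdot\}$, and for $\delta$ small enough (so that $u^{\mathrm{o}}_v+\delta\le 1$, consistent with the ``slight changes'' reading of Assumption~\ref{ass:noise}) the $\min\{\cdot,\mathbb{1}\}$ is likewise inactive. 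Hence the saturated term equals $u^{\mathrm{o}}+u^{nc}(t)$ pathwise. Taking expectations and using $\mathbb{E}[u^{nc}(t)]=\mathbb{0}$ (the noise is zero-mean by symmetry of the uniform law) yields the affine mean recursion
\begin{equation}\label{eq:mean_rec}
\mu(t+1)=\Lambda P\,\mu(t)+(I-\Lambda)u^{\mathrm{o}}.
\end{equation}

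Next I would establish convergence of \eqref{eq:mean_rec}, whose key ingredient is $\rho(\Lambda P)<1$. Since $P$ is row-stochastic and $\Lambda$ is diagonal with entries in $[0,1]$, $\Lambda P$ is nonnegative with row sums $\lambda_v\le 1$, i.e.\ substochastic; Assumption~\ref{ass:P} ensures that from every node one reaches a node $w$ with $\lambda_w<1$, so mass leaks from every initial state and no row-sum-one invariant component survives. This is the classical argument guaranteeing $\rho(\Lambda P)<1$, hence $I-\Lambda P$ is invertible and the Neumann series $\sum_{k\ge 0}(\Lambda P)^k=(I-\Lambda P)^{-1}$ converges. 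Iterating \eqref{eq:mean_rec}, the term $(\Lambda P)^{t}\mu(0)$ vanishes and the accumulated sum converges, giving $\mu(t)\to(I-\Lambda P)^{-1}(I-\Lambda)u^{\mathrm{o}}=:\mu_\infty$ for any $\mu(0)=x(0)$, which is exactly \eqref{eq:steady_state_free} and coincides with the unique fixed point of \eqref{eq:mean_rec}.

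Finally, to show $\mu_\infty\in[0,1]^{N}$, I would prove that $M:=(I-\Lambda P)^{-1}(I-\Lambda)$ is row-stochastic. Nonnegativity follows from $(I-\Lambda P)^{-1}=\sum_{k\ge0}(\Lambda P)^{k}\succeq 0$ together with $(I-\Lambda)\succeq 0$. For the row sums, $P\mathbb{1}=\mathbb{1}$ gives $(I-\Lambda P)\mathbb{1}=(I-\Lambda)\mathbb{1}$, hence $M\mathbb{1}=\mathbb{1}$. Thus each entry of $\mu_\infty=Mu^{\mathrm{o}}$ is a convex combination of the entries of $u^{\mathrm{o}}\in[0,1]^{N}$, so $\mu_\infty\in[0,1]^{N}$.

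The main obstacle I anticipate is the rigorous treatment of the saturation inside the expectation: Assumption~\ref{ass:noise} only guarantees the lower clip is inactive, so the clean identity $\mathbb{E}[\max\{\mathbb{0},\min\{u^{\mathrm{o}}+u^{nc}(t),\mathbb{1}\}\}]=u^{\mathrm{o}}$ additionally requires controlling the upper clip (or restricting to $u^{\mathrm{o}}_v+\delta\le1$); otherwise the mean recursion acquires a bias term and \eqref{eq:steady_state_free} holds only approximately. The spectral bound $\rho(\Lambda P)<1$ is the other load-bearing step, but it is standard for FJ-type dynamics under Assumption~\ref{ass:P}.
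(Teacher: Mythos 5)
Your proof is correct and follows essentially the same route as the paper's: freeze $u(t)\equiv u^{\mathrm{o}}$, argue the saturation is inactive, take expectations to get the affine mean recursion, and invoke the standard FJ spectral argument --- which the paper simply outsources to \cite{c17} and you spell out via $\rho(\Lambda P)<1$ and the row-stochasticity of $(I-\Lambda P)^{-1}(I-\Lambda)$. Your caveat about the upper clip is well taken: the paper likewise drops the $\min\{\cdot,\mathbb{1}\}$ without comment, implicitly relying on $u^{\mathrm{o}}_v+\delta\le 1$, which Assumption~\ref{ass:noise} does not literally guarantee.
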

\begin{proof}
    Since $u^{c}(t)=0$ for all $t \in \mathbb{N}_0$, according to \eqref{eq:input_dynamics} then $u(t+1)=u(t)=u^{\mathrm{o}} \in [0,1]^{N}$ for all $t \in \mathbb{N}_{0}$. In turn, it thus straightforwardly follows from \eqref{eq:dyn_model} that
    \begin{equation*}
        x(t+1)=\Lambda P x(t)+(I-\Lambda)(u^{\mathrm{o}}+u^{nc}(t)),~~\forall t \in \mathbb{N}_0,
    \end{equation*}
    and, accordingly, that 
    \begin{equation}\label{eq:expected_dynamics}
    \mu(t+1):=\mathbb{E}[ {x}(t+1)]=\Lambda P \mu(t)+(I-\Lambda)u^{\mathrm{o}},~~\forall t \in \mathbb{N}_0.
    \end{equation}
    The steady-state result in \eqref{eq:steady_state_free} straightforwardly results from the same reasoning in \cite{c17}, concluding the proof.
\end{proof}
Therefore, under Assumption~\ref{ass:P}, the expected latent opinions converge to a profile that is a convex combination of the initial inclinations of the agents driven by the strength of their mutual bonds and their susceptibility.
\begin{remark}[Lemma~\ref{lemma:steady_free} and our modeling choices]
    The result in Lemma~\ref{lemma:steady_free} indicates that persistent changes in opinions can be achieved by acting on individual biases, supporting our choice of designing fostering policies directly targeting a change in $u^{\mathrm{o}}$ and indirectly leveraging social imitation, while not changing the set of initial adopters nor changing the features of social bonds. 
\end{remark}
\subsection{Interpreting our model as a multilayer one}
\begin{figure}[!tb]
    \centering
\begin{tikzpicture} 
\scalebox{.8}{\begin{scope}[yshift=-1cm]
\path (1.2,5.75) edge[black, -{Latex}] (1.2,4.75);
\node[fill=white, draw=white, rectangle] at (2.8, 5.25) {White noise};

% Parallelogramma (Layer 1)
\fill[pink!30] (-1,4) -- (2.5,4) -- (1.75,4.75) -- (-1.75,4.75) -- cycle;

% Nodi nel parallelogramma 
\node[circle, fill=pink!90!black, inner sep=1.5pt] (A1) at (0,4.375) {};
\node[circle, fill=pink!90!black, inner sep=1.5pt] (A2) at (-0.8,4.5) {};
\node[circle, fill=pink!90!black, inner sep=1.5pt] (A3) at (1,4.25) {};
\node[circle, fill=pink!90!black, inner sep=1.5pt] (A4) at (1.5,4.5) {};

\node[fill=white, draw=white, rectangle ] at (4.05, 4.25) {Layer 4: \textit{Jammer}};

% Parallelogramma (Layer 2)
\fill[cyan!15] (-1,2.75) -- (2.5,2.75) -- (1.75,3.5) -- (-1.75,3.5) -- cycle;

% Nodi nel parallelogramma 
\node[circle, fill=cyan, inner sep=1.5pt] (B1) at (0,3.125) {};
\node[circle, fill=cyan, inner sep=1.5pt] (B2) at (-0.8,3.25) {};
\node[circle, fill=cyan, inner sep=1.5pt] (B3) at (1,3.0) {};
\node[circle, fill=cyan, inner sep=1.5pt] (B4) at (1.5,3.25) {};

\node[fill=white, draw=white, rectangle ] at (4, 3.125) {Layer 3: \textit{Agents}};

% Parallelogramma (Layer 3)
\fill[teal!20] (-1,1.5) -- (2.5,1.5) -- (1.75,2.25) -- (-1.75,2.25) -- cycle;

% Nodi nel parallelogramma 
\node[circle, fill=teal, inner sep=1.5pt] (C1) at (0,1.875) {};
\node[circle, fill=teal, inner sep=1.5pt] (C2) at (-0.8,2) {};
\node[circle, fill=teal, inner sep=1.5pt] (C3) at (1,1.75) {};
\node[circle, fill=teal, inner sep=1.5pt] (C4) at (1.5,2) {};

\node[fill=white, draw=white, rectangle ] at (4.2, 1.875) {Layer 2: \textit{Reservoir}};

% Self-loops Layer 3
\path (C1) edge[teal!80, -{Latex}, looseness=10, out=130, in=350] (C1);
\path (C2) edge[teal!80, -{Latex}, looseness=10, out=130, in=350] (C2);
\path (C3) edge[teal!80, -{Latex}, looseness=10, out=130, in=350] (C3);
\path (C4) edge[teal!80, -{Latex}, looseness=10, out=130, in=350] (C4);

%Parallelogramma (Layer 4)
\fill[gray!20] (-1,0) -- (2.5,0) -- (1.75,0.75) -- (-1.75,0.75) -- cycle;

% Nodi 
\node[circle, fill=gray, inner sep=1.5pt] (D1) at (0,0.375) {};
\node[circle, fill=gray, inner sep=1.5pt] (D2) at (-0.8,0.5) {};
\node[circle, fill=gray, inner sep=1.5pt] (D3) at (1,0.25) {};
\node[circle, fill=gray, inner sep=1.5pt] (D4) at (1.5,0.5) {};

% Aggiungi il testo per il Layer 4
\node[fill=white, draw=white, rectangle] at (4.4, 0.375) {Layer 1: \textit{Policymaker}};

% Frecce curve tra i nodi del layer 2
\path (B1) edge[cyan!90, -{Latex}, bend right=20] (B3);
\path (B2) edge[cyan!90, -{Latex}, bend right=20] (B1);
\path (B3) edge[cyan!90, -{Latex}, bend right=30] (B2);
\path (B4) edge[cyan!90, -{Latex}, bend right=15] (B3);
\path (B3) edge[cyan!90, -{Latex}, bend right=15] (B4);

% Self loops layer 2
\path (B2) edge[cyan!90, -{Latex}, looseness=10, out=-120, in=-200] (B2);
\path (B1) edge[cyan!90, -{Latex}, looseness=10, out=350, in=50] (B1);
\path (B3) edge[cyan!90, -{Latex}, looseness=10, out=10, in=-50] (B3);
\path (B4) edge[cyan!90, -{Latex}, looseness=10, out=-50, in=10] (B4);

% Frecce tra i nodi dei layer
\draw[pink!90!black, -{Latex}] (A1) -- (B1);
\draw[pink!90!black,  -{Latex}] (A2) -- (B2);
\draw[pink!90!black, -{Latex}] (A3) -- (B3);
\draw[pink!90!black, -{Latex}] (A4) -- (B4);

\draw[teal, -{Latex}] (C1) -- (B1);
\draw[teal, -{Latex}] (C2) -- (B2);
\draw[teal, -{Latex}] (C3) -- (B3);
\draw[teal, -{Latex}] (C4) -- (B4);

\draw[gray,-{Latex}] (D1) -- (C1);
\draw[gray,-{Latex}] (D2) -- (C2);
\draw[gray,-{Latex}] (D3) -- (C3);
\draw[gray,-{Latex}] (D4) -- (C4);

% Nuova freccia curva tratteggiata tra il layer 2 e layer 4
\path (2,3.25) edge[black!80, -{Latex}, dashed, bend left=30] (2,0.5);

\path (1.2,-1) edge[black, -{Latex}] (1.2,0);
\node[fill=white, draw=white, rectangle] at (4, -0.75) {Market Demands/Societal needs};

\end{scope}}
\end{tikzpicture}\vspace{-.2cm}
    \caption{A multilayer representation of the opinion dynamics model in~\eqref{eq:dyn_model}.}
    \label{fig:multi}
\end{figure}
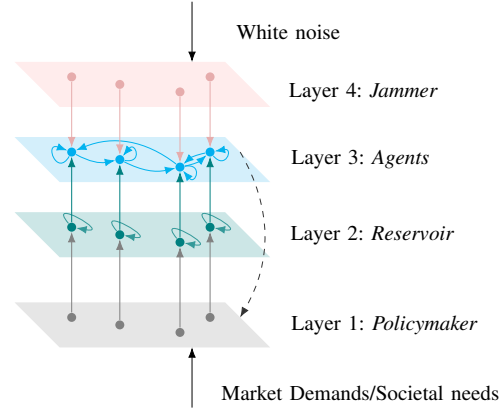
As schematized in \figurename{~\ref{fig:multi}}, our opinion dynamics model in \eqref{eq:dyn_model} ultimately incorporates four interacting layers, each associated with a main actor of our framework. 

As policymakers generally nudge agents' opinions and behaviors in a desired direction based on \textquotedblleft external dictates\textquotedblright, like societal needs or market demands, the \emph{policymaker layer} shapes its actions at each time step $t \in \mathbb{N}_0$ in response to these dictates through the controllable input $u^c(t)$. The latter is transmitted to the \emph{reservoir layer}, which accumulates these actions according to \eqref{eq:input_dynamics} and communicates them to the \emph{agents layer}, ultimately driving agents' opinions through such interaction. In turn, agents' inclinations dynamically evolve not only due to their mutual interactions and the direct exchanges with the reservoir layer\footnote{Implying indirect interactions with the policymaker layer too.}, but also to uncontrolled external factors (see \eqref{eq:dyn_model_x}). The impact of these additional exogenous actors could be represented through a final layer, the \emph{jammer's layer},  modeled as a source of white (and, hence, unpredictable) noise entering the system and introducing random fluctuations. The jammer's influence is unpredictable, representing the role of uncontrollable, external influences in the system's dynamics.

When the policymaker relies on a feedback mechanism to continuously adjust the fostering policies according to the agents' needs, this choice additionally introduces an interaction between the agents and the policymaker layers (represented as a dashed line in \figurename{~\ref{fig:multi}}).  

\section{A closer look at the model's properties} \label{sec:prop}
We now analyze the properties of the opinion dynamics model in \eqref{eq:dyn_model} for two classes of interventions, namely static and feedback policies. 

In both cases and in line with Assumption~\ref{ass:noise}, we suppose that the policies $\{u^{c}(t)\}_{t \in \mathbb{N}_0}$ satisfy the following (worst-case) constraint by design\footnote{In formulating our policy design problems (see Section~\ref{sec:policy_design}), we explicitly impose this constraint.}:
\begin{equation} \label{eq:bound_u}
\mathbb{0} \leq \delta - u^\mathrm{o} \leq \sum_{\tau=0}^{t-1} u^c(\tau) \leq \mathbb{1} - \delta - u^\mathrm{o}, 
\end{equation}
to ensure that $u(t)+u^{nc}(t) \in [0,1]^{N}$ for all possible realizations of the stochastic input $u^{nc}(t)$ at all $t \in \mathbb{N}_0$. According to \eqref{eq:bound_u}, the system dynamics \eqref{eq:dyn_model} can thus equivalently be rewritten as 
\begin{equation} \label{eq:eq_dyn_cons}
{x}(t+1)=\Lambda P  {x}(t)+(I-\Lambda)(u(t)+u^{nc}(t)).
\end{equation}
Apart from assuming that \eqref{eq:bound_u} holds, we further assume that the designed policies are deployed under budget constraints, as formalized in the following assumption.
\begin{assumption}[Limited resources]\label{ass:budget}
    Nudging policies are enacted under a fixed and finite \emph{budget} $\beta \in \mathbb{R}_{+}$, with $\beta \ll \infty$, depleted over time. Therefore, the resources available at time $t \in \mathbb{N}_0$ to nudge individuals are dictated by
\begin{equation}\label{eq:budget}
    U(t)=\max\left\{0,\beta-\sum_{k=0}^{t-1}\sum_{v \in \mathcal{V}}u_{v}^{c}(t-k)\right\}.
\end{equation}
\end{assumption}
\subsection{Expected inclinations with constant policies}\label{sec:constant_policy_ours}
Let us first consider the following static policy: 
\begin{equation}\label{eq:constant_input}
    u^{c}(t)\!=\!\begin{cases}
        \nu, ~~
        \text{if}~ u(t)\!+\! u^c(t) \!\in\! [0,\!1]^{N} \lor U(t) \!\geq\! \sum_{v\in\mathcal{V}}\nu_v,\\
        \nu^{\mathrm{r}}, ~ 
        \text{if}~ u(t)\!+\! u^c(t) \!\in\! [0,\! 1]^{N} \lor U(t) \!\in\!(0,\!\sum_{v\in\mathcal{V}}\!\nu_v),\\
        0, ~ \mbox{ otherwise.}
    \end{cases}
\end{equation}
where $\nu_v  \neq  0$ is the baseline magnitude of the intervention the policymaker has tailored to the $v$-th agent, with $v \in \mathcal{V}$, while $\nu^{\mathrm{r}}=\nu\frac{U(t)}{\sum_{v\in\mathcal{V}}\nu_v}$ so that the baseline input $\nu$ is proportionally scaled to deplete remaining resources in one step within the feasibility limits\footnote{Alternative definitions of $\nu^\mathrm{r}$ can also be considered, yet not changing our formal results.}. Note that, if it is not possible to exploit the whole budget without exceeding such limits, some resources can still remain unused. Meanwhile, when $\nu$ is proportional to the budget $\beta$, then $\nu^{\mathrm{r}}=\mathbb{0}$.

Let us then assume that the following holds.
\begin{assumption}\label{ass:finite_time_nu}
The static policy $u^c(t)=\nu$ is enacted until a finite instant $T \!\in\! \mathbb{N}$, i.e., $u^c(T\!+\!1)\!=\!\nu^{\mathrm{r}}$ and $u^c(t)\!=\!\mathbb{0}$ for all $t >T\!+\!1$.
\end{assumption}
By relying on this assumption, we can formalize an asymptotic result on the expected latent inclinations.
\begin{proposition}[Asymptotic opinions under static policies]\label{prop:asymptot_ours}
Let Assumptions~\ref{ass:P}-\ref{ass:finite_time_nu} and \eqref{eq:bound_u} hold. Then the asymptotic expected inclinations under the policy in \eqref{eq:constant_input} satisfy
\begin{subequations}
   \begin{equation}\label{eq:equilibria}
       \mu_{\infty}=(I-P\Lambda)^{-1}(I-\Lambda)\bar{u},
   \end{equation} 
   with $\mu_{\infty} \in [0,1]^{N}$,
   \begin{equation}\label{eq:u_bar_def}
       \bar{u}=u^{\mathrm{o}}+T\nu +\nu^{r},
   \end{equation}
   and $\bar{u}\ll \infty$.
\end{subequations}
\end{proposition}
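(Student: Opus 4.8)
The plan is to mimic the argument of Lemma~\ref{lemma:steady_free}, reducing the claim to the convergence of a \emph{time-invariant} affine recursion once the cumulative input $u(t)$ has settled to a constant. First I would show that, under Assumption~\ref{ass:finite_time_nu} together with \eqref{eq:input_dynamics}, the accumulated input becomes constant after a finite time. Summing \eqref{eq:input_dynamics} yields $u(t)=u^{\mathrm{o}}+\sum_{\tau=0}^{t-1}u^{c}(\tau)$, and since the policy \eqref{eq:constant_input} injects the baseline $\nu$ for a finite number of steps totaling $T\nu$, then the residual $\nu^{\mathrm{r}}$ once, and $\mathbb{0}$ thereafter, the partial sums stabilize so that $u(t)=\bar{u}=u^{\mathrm{o}}+T\nu+\nu^{\mathrm{r}}$ for every $t\geq T^{\star}$, with $T^{\star}$ finite. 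Here I would also stress that $u(t)$ is a \emph{deterministic} quantity: the branch selection in \eqref{eq:constant_input} depends only on the feasibility test $u(t)+u^{c}(t)\in[0,1]^{N}$ and on the residual budget $U(t)$ in \eqref{eq:budget}, neither of which involves the noise realizations $u^{nc}$. Since $T$, $\nu$, and $\nu^{\mathrm{r}}$ are finite, $\bar{u}\ll\infty$ is then immediate.

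Next I would pass to expectations in the constrained dynamics \eqref{eq:eq_dyn_cons}, which is legitimate because \eqref{eq:bound_u} is assumed to hold and thus the saturations in \eqref{eq:dyn_model_x} are inactive. Using $\mathbb{E}[u^{nc}(t)]=\mathbb{0}$ (the uncontrollable input is zero-mean under Assumption~\ref{ass:noise}) and the determinism of $u(t)$, the mean $\mu(t)=\mathbb{E}[x(t)]$ obeys $\mu(t+1)=\Lambda P\,\mu(t)+(I-\Lambda)u(t)$, i.e.\ exactly \eqref{eq:expected_dynamics} but with the constant forcing replaced by $u(t)$. Restricting to $t\geq T^{\star}$, this collapses to the autonomous affine recursion $\mu(t+1)=\Lambda P\,\mu(t)+(I-\Lambda)\bar{u}$.

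The asymptotic value is then the fixed point of this recursion. As in the proof of Lemma~\ref{lemma:steady_free} (and \cite{c17}), Assumption~\ref{ass:P} guarantees $\rho(\Lambda P)<1$, so $I-\Lambda P$ is invertible and the iteration converges from any initial condition $\mu(T^{\star})$ to the unique equilibrium solving $(I-\Lambda P)\mu_{\infty}=(I-\Lambda)\bar{u}$, which is precisely \eqref{eq:equilibria} with $\bar{u}$ given by \eqref{eq:u_bar_def}. To confirm $\mu_{\infty}\in[0,1]^{N}$, I would observe that every iterate of the saturated model satisfies $x(t)\in[0,1]^{N}$, hence $\mu(t)=\mathbb{E}[x(t)]\in[0,1]^{N}$ for all $t$; since $[0,1]^{N}$ is closed, the limit remains in it.

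The main obstacle is not the linear-algebra tail, which is inherited essentially verbatim from Lemma~\ref{lemma:steady_free}, but the bookkeeping that turns the piecewise, budget-dependent policy \eqref{eq:constant_input} into a clean finite-time convergence of $u(t)$ to $\bar{u}$: one must check that the feasibility constraint \eqref{eq:bound_u} and the budget in \eqref{eq:budget} are exactly what make the three-branch definition reduce to \textquotedblleft $\nu$ for a finite horizon, then $\nu^{\mathrm{r}}$, then $\mathbb{0}$\textquotedblright, so that no further controlled input perturbs $u(t)$ once it equals $\bar{u}$. Establishing $\rho(\Lambda P)<1$ from Assumption~\ref{ass:P} is the only genuinely structural ingredient, and it is reused from Lemma~\ref{lemma:steady_free}.
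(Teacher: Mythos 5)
Your proposal is correct and follows essentially the same route as the paper's proof: pass to expectations under the inactive saturation guaranteed by \eqref{eq:bound_u}, observe that Assumption~\ref{ass:finite_time_nu} makes $\mathbb{E}[u(t)]$ settle to $\bar{u}=u^{\mathrm{o}}+T\nu+\nu^{\mathrm{r}}$ after finitely many steps, and then invoke the stability of $\Lambda P$ from Assumption~\ref{ass:P} exactly as in Lemma~\ref{lemma:steady_free}; your added bookkeeping (determinism of $u(t)$, closedness of $[0,1]^{N}$) only makes explicit what the paper leaves implicit. Note that your fixed point $(I-\Lambda P)^{-1}(I-\Lambda)\bar{u}$ is the one consistent with the dynamics and with \eqref{eq:steady_state_free}, whereas the displayed equation \eqref{eq:equilibria} writes $(I-P\Lambda)^{-1}$ --- an apparent typo in the statement rather than a flaw in your argument.
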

\begin{proof}
 Thanks to~\eqref{eq:bound_u}, the latent inclination evolves according to~\eqref{eq:eq_dyn_cons} and, thus, it is straightforward to prove that
 \begin{equation}\label{eq:average_state_New}
     \mu(t+1):=\mathbb{E}[x(t+1)]=\Lambda P \underbrace{\mathbb{E}[x(t)]}_{:=\mu(t)}+(I-\Lambda)\mathbb{E}[u(t)],
 \end{equation}
 due to Assumption~\ref{ass:noise}. Meanwhile, because of Assumption~\ref{ass:finite_time_nu}, then 
 \begin{equation}
 \mathbb{E}[u(t)]=\bar{u}=u^{\mathrm{o}}+T\nu +\nu^{r},~~ \forall t > T+1,
\end{equation} 
where $\bar{u}\ll \infty$ is the maximum achievable value for $\mathbb{E}[u(t)]$ under the enacted constant (yet saturated) policy in \eqref{eq:constant_input}, with its finiteness being a consequence of \eqref{eq:bound_u} and Assumption~\ref{ass:budget}. Accordingly, it further holds that
\begin{equation}\label{eq:condition_input_constant2}
\lim_{t\rightarrow \infty} \mathbb{E}[u(t)]=\bar{u},     
\end{equation}
from which \eqref{eq:equilibria} follows thanks to Assumption~\ref{ass:P}, thus concluding the proof.
\end{proof}
Therefore, expected inclinations toward a new technology/service asymptotically converge to a finite value dictated by the characteristics of the static intervention in \eqref{eq:constant_input}, the features of interpersonal bonds, and the agents' initial biases.
\subsection{Comparison with~\cite{c26} under static policies}
Toward showing the suitability of the proposed model to characterize long-term shifts in individual attitudes, we now compare the asymptotic expected inclinations reported in~\eqref{eq:equilibria} with those attained by using the model proposed in~\cite{c26} under the assumption that the policymaker enacts \eqref{eq:constant_input}. To this end, let $\mu^{\mathrm{st}}(t)$ be the mean inclinations at time $t \in \mathbb{N}_0$ dictated by the model proposed in~\cite{c26}, which evolves according to
\begin{equation}\label{eq:average_state_16}
    \mu^{\mathrm{st}}(t+1)=\Lambda P \mu^{\mathrm{st}}(t)+(I-\Lambda)(u^{\mathrm{o}}+u^{c}(t)).
\end{equation}
Then, the following asymptotic result holds.
\begin{proposition}\label{prop:asymp_16}
    Let Assumptions~\ref{ass:P}-\ref{ass:finite_time_nu} and \eqref{eq:bound_u} hold. Then, the mean inclinations in \eqref{eq:average_state_16} under the policy in \eqref{eq:constant_input} satisfy
   \begin{equation}\label{eq:asymptotic_constant_input}
       \mu_{\infty}^{\mathrm{st}}=\lim_{t \rightarrow \infty} \mu^{\mathrm{st}}(t)=(I-P\Lambda)^{-1}(I-\Lambda)u^{\mathrm{o}},
   \end{equation} 
   with $\mu_{\infty}^{\mathrm{st}} \in [0,1]^{N}$.
\end{proposition}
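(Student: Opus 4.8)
The plan is to exploit the defining structural feature of the model~\eqref{eq:average_state_16}: unlike the accumulating reservoir dynamics~\eqref{eq:input_dynamics}, here the controlled action $u^c(t)$ enters the state update additively on top of the fixed bias $u^{\mathrm{o}}$ and is \emph{not} integrated. Consequently, under Assumption~\ref{ass:finite_time_nu}, once $t>T+1$ we have $u^c(t)=\mathbb{0}$ and the recursion~\eqref{eq:average_state_16} collapses \emph{exactly} onto the control-free expected dynamics~\eqref{eq:expected_dynamics}, namely $\mu^{\mathrm{st}}(t+1)=\Lambda P\,\mu^{\mathrm{st}}(t)+(I-\Lambda)u^{\mathrm{o}}$. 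The entire argument then reduces to the convergence of this autonomous affine recursion, started from whatever value $\mu^{\mathrm{st}}(T+1)$ is reached during the finite transient.

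First I would record that the finitely many steps $t\le T+1$ only fix the initial condition $\mu^{\mathrm{st}}(T+1)$ for the subsequent autonomous evolution and are otherwise irrelevant to the limit: being the output of finitely many affine updates from $\mu^{\mathrm{st}}(0)\in[0,1]^N$ with bounded inputs, $\mu^{\mathrm{st}}(T+1)$ is a well-defined, finite vector. Next I would invoke the same Schur-stability fact already used in Lemma~\ref{lemma:steady_free} (and in~\cite{c17}): under Assumption~\ref{ass:P}, every node admits a directed path to some $w$ with $\lambda_w<1$, which forces $\rho(\Lambda P)<1$. Hence $I-\Lambda P$ is invertible and the transient term $(\Lambda P)^{t-(T+1)}\bigl(\mu^{\mathrm{st}}(T+1)-\mu_{\infty}^{\mathrm{st}}\bigr)$ decays geometrically to $\mathbb{0}$, so $\mu^{\mathrm{st}}(t)$ converges to the unique fixed point irrespective of $\mu^{\mathrm{st}}(T+1)$.

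Finally I would identify the limit. Imposing stationarity $\mu_{\infty}^{\mathrm{st}}=\Lambda P\,\mu_{\infty}^{\mathrm{st}}+(I-\Lambda)u^{\mathrm{o}}$ yields $(I-\Lambda P)\mu_{\infty}^{\mathrm{st}}=(I-\Lambda)u^{\mathrm{o}}$, i.e.\ exactly~\eqref{eq:asymptotic_constant_input}. Crucially, this fixed point coincides with the control-free limit $\mu_\infty$ of Lemma~\ref{lemma:steady_free} in~\eqref{eq:steady_state_free}, which is precisely the quantitative content of the comparison: the model of~\cite{c26} asymptotically \emph{forgets} the static policy and reverts to its open-loop profile. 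The membership $\mu_{\infty}^{\mathrm{st}}\in[0,1]^N$ then requires no separate argument, as it is the same convex-combination limit already certified in Lemma~\ref{lemma:steady_free}.

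I expect the only genuinely delicate point to be the Schur-stability step $\rho(\Lambda P)<1$ derived from Assumption~\ref{ass:P}; everything else is bookkeeping. Since Lemma~\ref{lemma:steady_free} already hinges on exactly this property (via~\cite{c17}), I would reuse it rather than reprove it, leaving the remaining obstacle essentially presentational: making explicit that discarding the finite-horizon transient $t\le T+1$ is legitimate, because the post-$(T{+}1)$ dynamics are autonomous and Schur stable and therefore insensitive to the particular state inherited from the active-policy phase.
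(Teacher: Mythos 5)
Your proof is correct and follows essentially the same route as the paper's: observe that under Assumption~\ref{ass:finite_time_nu} the effective input $u^{\mathrm{o}}+u^{c}(t)$ reverts to $u^{\mathrm{o}}$ for all $t>T+1$, so the recursion becomes the control-free one and converges by the same Schur-stability argument underlying Lemma~\ref{lemma:steady_free}; you merely make explicit the geometric decay of the transient that the paper leaves implicit. Incidentally, your fixed-point computation yields $(I-\Lambda P)^{-1}$, consistent with Lemma~\ref{lemma:steady_free} and with the dynamics matrix $\Lambda P$, which suggests the $(I-P\Lambda)^{-1}$ appearing in the proposition statement is a typo rather than a discrepancy in your argument.
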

\begin{proof}
    According to \eqref{eq:constant_input}, 
    the following holds:
    \begin{equation*}
        u^{\mathrm{o}}+u^{c}(t)=\begin{cases}
            u^{\mathrm{o}}+\nu,&\text{if } t \leq T,\\
            u^{\mathrm{o}}+\nu^{\mathrm{r}},& \text{if } t = T +1,\\
            u^{\mathrm{o}},& \text{if } t > T+1.
        \end{cases}
    \end{equation*}
    Therefore, 
    it easily follows that 
    \begin{equation*}
        \lim_{t\rightarrow \infty} u^{\mathrm{o}}+u^c(t)=u^{\mathrm{o}},
    \end{equation*}
    and, consequently, that \eqref{eq:asymptotic_constant_input} holds, thus ending the proof.
\end{proof}
As \eqref{eq:asymptotic_constant_input} coincides with the asymptotic latent inclinations in the absence of external, controlled inputs, this result highlights that the model proposed in \cite{c26} implicitly relies on the assumption that any policy enacted with a limited budget cannot lead to an irreversible shift in one's expected inclination. Note that, while this might reflect reality when on-off policies are undertaken, such an (implicit) assumption is instead likely falsified when systemic actions are performed.

Based on Proposition~\ref{prop:asymp_16}, we can then compare the asymptotic inclinations resulting from our modeling choices and those made in \cite{c26}, as subsequently formalized. 
\begin{proposition}\label{prop:comparison}
    Let Assumptions~\ref{ass:P}-\ref{ass:finite_time_nu} and \eqref{eq:bound_u} hold. Let $\mu(t)$ and $\mu^{\mathrm{st}}(t)$ evolve as in \eqref{eq:average_state_New} and \eqref{eq:average_state_16}, respectively. Then, by enacting the policy in \eqref{eq:constant_input}, it asymptotically holds that
    \begin{equation}
        \mu^{\mathrm{st}  }(\infty)<\mu(\infty).
    \end{equation}
\end{proposition}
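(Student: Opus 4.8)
The plan is to combine the two asymptotic characterizations already established in Proposition~\ref{prop:asymptot_ours} and Proposition~\ref{prop:asymp_16}, and then to show that their difference is a strictly positive vector. First I would work directly from the steady-state fixed-point relations induced by the mean dynamics \eqref{eq:average_state_New} and \eqref{eq:average_state_16}. Since both $\mu(\infty)$ and $\mu^{\mathrm{st}}(\infty)$ are equilibria of the same affine map $\mu \mapsto \Lambda P \mu + (I-\Lambda)(\cdot)$, driven respectively by $\bar u = u^{\mathrm{o}} + T\nu + \nu^{\mathrm{r}}$ (see \eqref{eq:u_bar_def}) and by $u^{\mathrm{o}}$, subtracting the two fixed-point relations yields
\[
(I - \Lambda P)\big(\mu(\infty) - \mu^{\mathrm{st}}(\infty)\big) = (I-\Lambda)(\bar u - u^{\mathrm{o}}) = (I-\Lambda)(T\nu + \nu^{\mathrm{r}}),
\]
so that, setting $\Delta := \mu(\infty) - \mu^{\mathrm{st}}(\infty)$ and $M := (I-\Lambda P)^{-1}(I-\Lambda)$, one obtains $\Delta = M(T\nu + \nu^{\mathrm{r}})$. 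The goal then reduces to proving $\Delta > \mathbb{0}$ componentwise.

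The second step would establish that the driving vector $c := T\nu + \nu^{\mathrm{r}}$ is strictly positive. This follows because the controlled input is non-negative by construction and $\nu_v \neq 0$ for every $v \in \mathcal{V}$ (by the definition in \eqref{eq:constant_input}), so $\nu > \mathbb{0}$; since $T \in \mathbb{N}$ (hence $T \geq 1$) and $\nu^{\mathrm{r}} \geq \mathbb{0}$, we get $c_v \geq T\nu_v \geq \nu_v > 0$ for all $v$, and in particular $\min_v c_v > 0$.

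The key third step is to show that $M$ is a row-stochastic matrix. Non-negativity of $M$ follows from the Neumann-series expansion $(I-\Lambda P)^{-1} = \sum_{k\geq 0}(\Lambda P)^k$, which converges because Assumption~\ref{ass:P} guarantees $\rho(\Lambda P)<1$ (the same property underpinning the steady states in Proposition~\ref{prop:asymptot_ours} and Proposition~\ref{prop:asymp_16}); each $(\Lambda P)^k$ is entrywise non-negative, and right-multiplication by the non-negative diagonal $(I-\Lambda)$ preserves non-negativity. That the rows of $M$ sum to one is obtained by acting on $\mathbb{1}$: exploiting $P\mathbb{1}=\mathbb{1}$ gives $(I-\Lambda P)\mathbb{1} = (I-\Lambda)\mathbb{1}$, whence $M\mathbb{1} = (I-\Lambda P)^{-1}(I-\Lambda)\mathbb{1} = \mathbb{1}$. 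With $M$ row-stochastic, each component satisfies $\Delta_i = \sum_j M_{ij} c_j \geq \big(\min_j c_j\big)\sum_j M_{ij} = \min_j c_j > 0$, so $\Delta > \mathbb{0}$, i.e. $\mu^{\mathrm{st}}(\infty) < \mu(\infty)$, which is the claim.

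I anticipate the main obstacle to be the careful justification of the row-stochasticity of $M$: ensuring $\rho(\Lambda P)<1$ so that the inverse and its series are well-defined, and observing that the relevant matrix is $(I-\Lambda P)^{-1}(I-\Lambda)$ (consistent with the mean dynamics in \eqref{eq:average_state_New}--\eqref{eq:average_state_16}) rather than any reordered product, since it is precisely this ordering that makes $M$ row-stochastic through $P\mathbb{1}=\mathbb{1}$. Once row-stochasticity is in place, the convex-combination bound $\Delta_i \geq \min_j c_j > 0$ renders the strict inequality immediate and sidesteps any delicate reachability argument over the nodes with $\lambda_i = 1$.
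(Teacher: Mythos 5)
Your proposal is correct and follows exactly the route the paper intends: the paper omits the proof, stating only that the claim "follows from the definition of $\bar{u}$," i.e., from $\bar{u}=u^{\mathrm{o}}+T\nu+\nu^{\mathrm{r}}>u^{\mathrm{o}}$ together with monotonicity of the steady-state map. Your write-up supplies the details the paper leaves implicit -- in particular the row-stochasticity of $(I-\Lambda P)^{-1}(I-\Lambda)$ via $P\mathbb{1}=\mathbb{1}$, which is precisely what upgrades $\bar{u}>u^{\mathrm{o}}$ to the strict componentwise inequality $\mu^{\mathrm{st}}(\infty)<\mu(\infty)$ even at nodes with $\lambda_i=1$.
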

\begin{proof}
    The proof straightforwardly follows from the definition of $\bar{u}$ in \eqref{eq:u_bar_def} and it is thus omitted.
\end{proof}
\subsection{Feedback policies with constraints and budget limitations}
Let us now consider a static feedback policy of the error between full acceptance and the average individual inclinations, i.e.,
\begin{subequations}\label{eq:state_feedback}
\begin{equation}
     {u^{c}}(t)=K(\mathbb{1}-\mu(t)), 
\end{equation}
with $K  \in  \mathbb{R}^{N \times N}$ designed such that
\begin{equation}
    \rho(\Lambda P  -  (I  -  
  \Lambda)K)<1.
\end{equation}
\end{subequations}
It is worth remarking that \eqref{eq:state_feedback} implies that ${u^{c}}(t)=\mathbb{0}$ whenever $\mu(t)=\mathbb{1}$ and, instead, ${u^{c}}(t)=\mathbb{1}$ if $\mu(t)=\mathbb{0}$. Let us further assume the following.
\begin{assumption}[Bound on the feedback policy]\label{ass:feedback}
    The feedback policy $u^{c}(t)$ in \eqref{eq:state_feedback} is bounded by design in an interval $[u_{\mathrm{min}}^{c},u_{\mathrm{max}}^{c}]$, such that that $\mu(t) \in [0,1]^{N}$ for all $t \in \mathbb{N}_0$.  
\end{assumption}
Hence, the mean cumulative input $\mathbb{E}[u(t)]$ satisfies 
%\begin{subequations}
%\begin{equation}
%    u^{\mathrm{o}}+tu_{\mathrm{min}}^{c}\leq \mathbb{E}[ {u}(t)] \leq u^{\mathrm{o}}+tu_{\mathrm{max}}^{c},~~~t\leq T,
%\end{equation}
%and 
\begin{equation}\label{eq:limits_behavior}
u^{\mathrm{o}}+\mathcal{T}_{\mathrm{min}}u_{\mathrm{min}}^{c}\leq \mathbb{E}[ {u}(t)] \leq u^{\mathrm{o}}+\mathcal{T}_{\mathrm{max}}u_{\mathrm{max}}^{c},~~~\forall t \in \mathbb{N}_0,%>T,
\end{equation}
%\end{subequations}
where $\mathcal{T}_{\mathrm{min}}$ and $\mathcal{T}_{\mathrm{max}}$ are diagonal matrices containing the time instants after which the controlled input is set to zero due to saturation of the states or consumption of the budget (similarly to \eqref{eq:u_bar_def}). %, while $T \in \mathbb{N}_0$ \textcolor{red}{is}.
Accordingly, the following asymptotic result holds.
\begin{proposition}[Asymptotic opinions and feedback] Let Assumptions~\ref{ass:P}-\ref{ass:budget} be satisfied and let the enacted policy be defined as in \eqref{eq:state_feedback} while satisfying Assumption~\ref{ass:feedback}.    Then, the expected inclination achieved by closing the loop is asymptotically limited, i.e.,
    \begin{equation}\label{eq:limited_opinions}
        \mu_{\infty}:=\lim_{t \rightarrow \infty} \mathbb{E}[ {x}(t)] \ll \infty.
    \end{equation}
\end{proposition}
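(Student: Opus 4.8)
The plan is to reduce the closed-loop behaviour to a stable linear recursion driven by a \emph{bounded} input, and then invoke the boundedness of the cumulative input already recorded in \eqref{eq:limits_behavior}. First I would take the expectation of the feasible dynamics \eqref{eq:eq_dyn_cons}. Since Assumption~\ref{ass:feedback} guarantees $\mu(t)\in[0,1]^N$ for all $t$ (so that \eqref{eq:bound_u} is respected), and the zero-mean property of $u^{nc}(t)$ from Assumption~\ref{ass:noise} gives $\mathbb{E}[u^{nc}(t)]=\mathbb{0}$, the mean state obeys the same recursion already used in \eqref{eq:average_state_New}, namely $\mu(t+1)=\Lambda P\,\mu(t)+(I-\Lambda)\,\mathbb{E}[u(t)]$. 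This isolates the entire feedback effect into the driving term $\mathbb{E}[u(t)]$.

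Second, I would establish that this recursion is BIBO stable. Exactly as in Lemma~\ref{lemma:steady_free}, Assumption~\ref{ass:P} yields $\rho(\Lambda P)<1$, so $(I-\Lambda P)$ is nonsingular and the variation-of-constants solution
\[
\mu(t)=(\Lambda P)^{t}\mu(0)+\sum_{k=0}^{t-1}(\Lambda P)^{t-1-k}(I-\Lambda)\,\mathbb{E}[u(k)]
\]
has a homogeneous part that vanishes as $t\to\infty$ and a convolution term dominated by a convergent geometric series whenever $\{\mathbb{E}[u(k)]\}$ is uniformly bounded.

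Third, I would feed in the bound \eqref{eq:limits_behavior}: the finiteness of the budget (Assumption~\ref{ass:budget}) together with the saturation of the states forces the controlled input to vanish after the finite times collected in $\mathcal{T}_{\mathrm{min}}$ and $\mathcal{T}_{\mathrm{max}}$, so that $\mathbb{E}[u(t)]$ is sandwiched between the two finite, time-invariant vectors $u^{\mathrm{o}}+\mathcal{T}_{\mathrm{min}}u_{\mathrm{min}}^{c}$ and $u^{\mathrm{o}}+\mathcal{T}_{\mathrm{max}}u_{\mathrm{max}}^{c}$ for all $t$. In particular $\mathbb{E}[u(t)]$ becomes constant past the largest entry of $\mathcal{T}_{\mathrm{max}}$, equal to some finite $\bar u$; substituting $\bar u$ into the stable recursion gives $\mu(t)\to(I-\Lambda P)^{-1}(I-\Lambda)\bar u=:\mu_{\infty}$, which is finite, establishing \eqref{eq:limited_opinions}.

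The main obstacle is the third step. Unlike the static case of Proposition~\ref{prop:asymptot_ours}, here the input is generated in closed loop through $u^{c}(t)=K(\mathbb{1}-\mu(t))$, so a priori nothing prevents the integrator \eqref{eq:input_dynamics} from accumulating indefinitely and driving $\mathbb{E}[u(t)]$ (and hence $\mu(t)$) unbounded. The crux is therefore to argue rigorously that the combination of the budget cap \eqref{eq:budget} and the state saturation built into Assumption~\ref{ass:feedback} forces $u^{c}(t)=\mathbb{0}$ in finite time on every component---this is precisely what the existence of the finite matrices $\mathcal{T}_{\mathrm{min}},\mathcal{T}_{\mathrm{max}}$ in \eqref{eq:limits_behavior} encodes. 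Once this saturation-in-finite-time is granted, boundedness (indeed convergence) of $\mu(t)$ follows immediately from the stable, eventually constant-input linear recursion, and the role of the design condition $\rho(\Lambda P-(I-\Lambda)K)<1$ is to guarantee well-posed behaviour of the loop while the input is still active.
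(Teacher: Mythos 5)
Your proposal is correct and follows essentially the same route as the paper, which simply states that the result follows from \eqref{eq:limits_behavior} by the same argument as Proposition~\ref{prop:asymptot_ours}; you have merely spelled out the details (expected dynamics, stability of $\Lambda P$ under Assumption~\ref{ass:P}, and the eventually-constant bounded input) that the paper leaves implicit. Your observation that the finite-time saturation encoded in $\mathcal{T}_{\mathrm{min}},\mathcal{T}_{\mathrm{max}}$ is the real crux is apt, since the paper asserts \eqref{eq:limits_behavior} without proof as well.
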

\begin{proof}
Along the same line of the proof of Proposition~\ref{prop:asymptot_ours}, the result in~\eqref{eq:limited_opinions} straightforwardly follows from \eqref{eq:limits_behavior}. Therefore, the proof is omitted.
\end{proof}

\section{Toward optimal nudging policy design}\label{sec:policy_design}
By relying on the model introduced in Section~\ref{sec:model}, we now propose two strategies that policymakers/stakeholders can adopt to design interventions that trade-off encouraging the widespread adoption of a new technology/service (i.e., maximizing social benefit) and avoiding waste of resources. 
\subsection{Optimized Constant Control Policy (CCP)}
As a first alternative to design a policy that targets the aforementioned goal in one shot, policymakers/stakeholders can take advantage of the asymptotic properties of the proposed model (discussed in~Section~\ref{sec:constant_policy_ours}). Specifically, considering a prefixed time horizon $T \in \mathbb{N}$ for the policy's deployment, a constant policy can be designed by solving the following problem
\begin{subequations}\label{eq:CCP_opt}
  \begin{align}
  \underset{\mu_{\infty},~u_{\infty}^{c}}{\mathrm{minimize}}&~~J^{\mathrm{CCP}}(\mu_{\infty},u_{\infty}^{c})\\
      \mbox{s.t.} &~~\mu_{\infty}=(I-\Lambda P)^{-1}(I-\Lambda)(u^{\mathrm{o}}+Tu_{\infty}^c),\\
      &~~T\sum_{v \in \mathcal{V}}u_{\infty,v}^c\leq \beta,\\
      & ~~u_{\infty,v}^c \geq 0, ~~\forall v \in \mathcal{V},\\
      &~~u^{\mathrm{o}}_{v}+Tu_{\infty,v}^c\leq 1-\delta,~~\forall v \in \mathcal{V},
  \end{align}  
  where the last constraint guarantees that \eqref{eq:bound_u} is satisfied, and the loss is defined as 
  \begin{equation}
      J^{\mathrm{CCP}}(\mu_{\infty},u_{\infty}^{c})\!=\! \|\mathbb{1}\!-\!\mu_{\infty}\|_{2}^{2\!}+\!\|Tu_{\infty}^c\|_{R}^{2\!}+\!\left\|\beta\!-\!T\!\sum_{v\in \mathcal{V}}\!u_{\infty,v}^c\right\|_{S}^{2\!}\!\!\!,
  \end{equation}
  with $R\succ 0$ and $S \succeq 0$ being penalties chosen by the policymaker, and the last term in the cost aims at minimizing the amount of unused resources. Note that, since $\delta < u^{\mathrm{o}}_v$ for all $v \in \mathcal{V}$ by Assumption~\ref{ass:noise}, the lower-bound in~\eqref{eq:bound_u} is guaranteed by construction.
\end{subequations}
As a result, then policymakers can enact
\begin{equation}
    u^{c}(t)=u_{\infty},~~\forall t \in \{0,1,\ldots,T-1\}.
\end{equation}
\subsection{Model Predictive Control (MPC) Fostering Policy}\label{sec:MPC}
Instead of looking at asymptotic behaviors, a policymaker/stakeholder can instead decide to optimize its strategies in a receding horizon fashion\footnote{
This choice allows us to mitigate the (strong) requirement of an infinite horizon control policy, which necessitates policymakers to unrealistically foreseen individual average attitudes over an infinite time span.} by monitoring average opinions and accordingly adjusting their strategies over time. In this case, the control problem can be formulated as:
\begin{subequations}\label{eq:MPC}
    \begin{align} 
      &\underset{\mathcal{M}_{|t},\mathcal{U}_{|t}^{c}}{\mathrm{minimize}}~~J^{\mathrm{MPC}}(\mathcal{M}_{|t},\mathcal{U}_{|t}^{c})\\  
        &\quad ~~~\mbox{s.t.}~~\mu_{|t}(k\!+\!1)\!=\!\Lambda P \mu_{|t}(k)\!+\!(I\!-\!\Lambda)u_{|t}(k),\\
        & \quad \quad \quad~~~ u_{|t}(k\!+\!1)\!=\!u_{|t}(k)\!+\!u_{|t}^{c}(k),~k \!\in\! [0,L\!-\!1],\\
         & \quad \quad \quad~~~u_{|t}^{\Sigma}(k)=\sum_{\tau=0}^{k}\sum_{v \in \mathcal{V}}u_{|t}^{c}(\tau),~~k \!\in\! [0,L\!-\!1],\\
        &\quad \quad \quad~~~u_{|t,v}^{c}(k) \geq 0,~\forall v \in \mathcal{V},~k\!\in\![0,L\!-\!1],\\
        &\quad \quad \quad~~~u_{|t}^{\Sigma}(k)\leq U(t)\!-\!u_{|t}^{\Sigma}(k\!-\!1),~k \in [1,L\!-\!1],\label{eq:budget_constr_MPC}\\
        &\quad \quad \quad~~~u_{|t,v}(k)\leq 1-\delta,~\forall v \in \mathcal{V},~k\!\in\![0,L\!-\!1],\label{eq:feasibility_constr_MPC}\\
       &\quad \quad \quad~~~ u_{|t}(0)\!=\!u(t),~~~ \mu_{|t}(0)=\mu(t),
      \end{align}
    with $\mathcal{M}_{|t}\!=\!\{\mu_{|t}(k)\}_{k=0}^{L}$ and $\mathcal{U}_{|t}^{c}\!=\!\{u_{|t}^{c}(k)\}_{k=0}^{L-1}$, $U(t)$ being defined as in \eqref{eq:budget}, and 
    \begin{align}\label{eq:MPC_cost}
        \nonumber J^{\mathrm{MPC}}(\mathcal{M}_{|t},\mathcal{U}_{|t}^{c})=&\sum_{k=0}^{L-1} \|\mathbb{1}\!-\!\mu_{|t}(k)\|_{2}^{2}+\|u^{c}_{|t}(k)\|_{R}^{2}\\
        &\qquad \qquad  +\|\mathbb{1}-\mu_{|t}(L\!-\!1)\|_{Q}^{2},
    \end{align}
    where $L \geq 1$ is the prediction horizon decided by the policymaker/stakeholder, while $R \succ 0$ controls the trade-off between adoption boosting/cost containment and the terminal penalty can be weighed according to 
    \begin{equation}
    (\Lambda P)^{\top} Q\Lambda P-Q=-I,
    \end{equation}
    since the dynamics of expected opinions is asymptotically stable by Assumption~\ref{ass:P}. Note that, while \eqref{eq:budget_constr_MPC} allows us to explicitly account for budget consumption in policy design (see Assumptions~\ref{ass:budget}), \eqref{eq:feasibility_constr_MPC} guarantees that \eqref{eq:bound_u} is satisfied as the lower-bound is already verified by construction (see the initial condition in \eqref{eq:input_dynamics}).
\end{subequations}
It is worth remarking that the cost in \eqref{eq:MPC_cost} comprises two terms that penalize the average distance of the agents' opinions to the acceptance of the targeted technology/service and the second term, which weights the distance of the policy action to be designed from its (equilibrium) value at an average full adoption\footnote{It is straightforward to prove that the controlled input at full adoption is $\bar{u}^{c}=\mathbb{0}$ and, thus, such proof is omitted.}.  

\begin{remark}[Practical issues with policy implementation]
Designing a policy as in \eqref{eq:MPC} requires continuative monitoring of the average agents' inclination, which is likely unfeasible in practice. We postpone tackling this practical issue along the same lines of \cite{c26} in future works.     
\end{remark}

\section{Numerical Example} \label{sec:num_example}
\begin{figure}[!tb]
    \centering
    \includegraphics[scale=0.5,trim=0cm 2cm 0cm 1.75cm,clip]{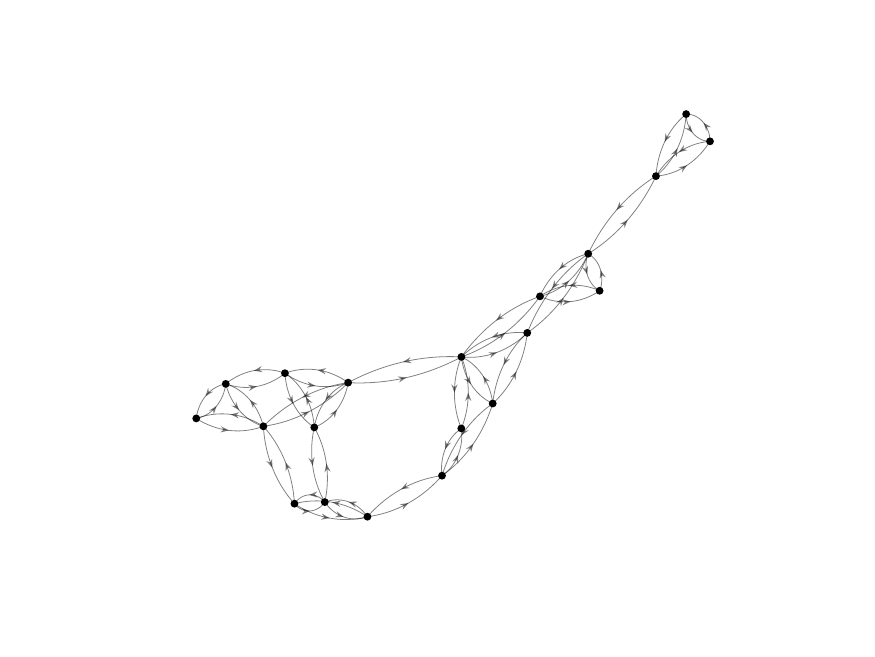}\vspace{-.3cm}
    \caption{The social network considered in our example, featuring $20$ agents and $7$ clusters of agents.}\vspace{-.5cm}
    \label{fig:grafo1}
\end{figure}
We now analyze the impact of the strategies introduced in Section~\ref{sec:policy_design} considering the (randomly generated) modular social network depicted in \figurename{~\ref{fig:grafo1}} and assuming that individual opinions evolve according to the long-term shifts model proposed in Section~\ref{sec:model}.

The considered social network comprises $N=20$ agents and $7$ clusters, generated by setting a link probability of $0.2$ and the probability of connection between agents of different clusters at $0.7$. We impose the elements in $\Lambda$ (see \eqref{eq:dyn_model}) to be equal, i.e., $\Lambda=\lambda I$, yet considering two scenarios where external interventions (i.e., $\lambda=0.25$) and social influences (namely, $\lambda=0.75$) are the main drivers of adoption, respectively. Meanwhile, we consider three setups for the initial bias $u^{\mathrm{o}}$ by splitting the agents into two groups, namely 
\begin{enumerate}
    \item \emph{mixed} biases: $u^{\mathrm{o}}_{v}=0.2$ for $10$ agents and $0.8$ for the remaining ones, thus having a population grouped into distinct factions with respect to the new technology/service;  
    \item \emph{negative} biases: $u^{\mathrm{o}}_{v}=0.2$ for half agents and $0.3$ for the remaining ones, so that the considered population has a negative polarization toward the technology/service;
    \item \emph{positive} biases: $u^{\mathrm{o}}_{v}=0.6$ for $10$ agents and $0.8$ for the remaining ones, considering a population that is instead positively polarized toward such a technology/service.
\end{enumerate}
The performance achieved through the designed policies is evaluated by looking at their \emph{social benefit}, here defined as 
\begin{equation}\label{eq:gammaT}
    \Gamma_{\mathrm{sim}}=\|\mathbb{1}-x(T_{\mathrm{sim}})\|_{2}^{2},
\end{equation}
where $T_{\mathrm{sim}}=30$ is the considered simulation horizon, as well as their cumulative cost and usage of the available budget, i.e.,
\begin{equation}\label{eq:cumulative_cost}
    u_{\mathrm{sim}}^{\Sigma}=\sum_{t=0}^{T_{\mathrm{sim}}-1}\sum_{v \in \mathcal{V}}u_{v}^{c}(t),~~
    B_{\%}=100\frac{u_{\mathrm{sim}}^{\Sigma}}{\beta}~[\%],
\end{equation}
respectively. In all our tests, uncontrollable factors are described as realizations of a uniformly distributed white noise, with $\delta=0.025$ 
(see Assumption~\ref{ass:noise}).

    \subsection{Policy nudged short-term \emph{vs} long-term shifts}
\begin{table*}[!tb]
    \caption{Social Benefit $\Gamma_{\mathrm{sim}}$ and cumulative cost $u_{\mathrm{sim}}^{\Sigma}$: \cite{c26} \emph{vs} proposed model and MPC strategy for different population bias and $L$.}
    \label{tab:comparison_c14}
\centering
    \begin{tabular}{|c|c|c|c|c|c|c|c|c|c|c|c|c|c|c|c|c|}
        \cline{2-17}
        \multicolumn{1}{c|}{} & \multicolumn{8}{c|}{Short-term shift model \cite{c26}} & \multicolumn{8}{c|}{Long-term shift model (ours)}\\
        \cline{2-17}
          \multicolumn{1}{c|}{} 
        & \multicolumn{4}{c|}{$\Gamma_{\mathrm{sim}}$ } 
        & \multicolumn{4}{c|}{$u_{\mathrm{sim}}^{\Sigma}$}
        & \multicolumn{4}{c|}{$\Gamma_{\mathrm{sim}}$  } 
        & \multicolumn{4}{c|}{$u_{\mathrm{sim}}^{\Sigma}$}
        \\

        \cline{2-17}
        \multicolumn{1}{c|}{} & \multicolumn{2}{c|}{\textbf{$L=5$}} & \multicolumn{2}{c|}{$L=20$} & \multicolumn{2}{c|}{$L=5$} & \multicolumn{2}{c|}{\textbf{$L=20$}} & \multicolumn{2}{c|}{$L=5$} & \multicolumn{2}{c|}{$L=20$} & \multicolumn{2}{c|}{\textbf{$L=5$}} & \multicolumn{2}{c|}{$L=20$}\\
        \hline 
        $\lambda$  & 0.25 & 0.75  & 0.25 & 0.75 & 0.25 & 0.75 & 0.25 & 0.75  & 0.25 & 0.75 & 0.25 & 0.75 & 0.25 & 0.75  & 0.25 & 0.75\\
        \hline 
          Mixed Bias & 6.32 & 5.50 & 6.17 & 5.34 & 9.58 & 10.00 & 7.49 & 8.25 & 0.01 & 0.01 & 0.01 & 0.01 & 9.50 & 9.50 & 9.50 & 9.50 \\
          \hline
         Negative Bias & 11.17 & 11.26 & 11.01 & 10.99 & 9.25 & 10.00 & 7.03 & 7.95 & 1.21 & 1.14 & 1.20 & 1.00 & 10.00 & 10.00 & 10.00  & 10.00 \\
        \hline
        Positive Bias & 1.91 & 1.85 & 1.83 & 1.78 & 9.73 & 10.00 & 7.74 & 8.67 & 0.01 & 0.01 & 0.01 & 0.01 & 5.50 & 5.50 & 5.50 & 5.50 \\
        \hline
    \end{tabular}
\end{table*}

% \begin{table*}[!tb]
%     \caption{Social Benefit $\Gamma_{\mathrm{sim}}$ and budget consumption $B_{\%}$ for different budgets $\beta$.}
%     \label{tab:budget_comparison}
%     \centering
%         \begin{tabular}{|c|c|c|c|c|c|c|c|c|c|c|c|c|}
%         \cline{2-13}
%         \multicolumn{1}{c|}{} & \multicolumn{4}{c|}{$\beta_\mathrm{high}$} & \multicolumn{4}{c|}{$\beta_\mathrm{mod}$}& \multicolumn{4}{c|}{$\beta_\mathrm{low}$}\\
%         \cline{2-13}
%         \multicolumn{1}{c|}{} & \multicolumn{2}{c|}{$\Gamma_{\mathrm{sim}}$} & \multicolumn{2}{c|}{$B_{\%}$} &\multicolumn{2}{c|}{$\Gamma_{\mathrm{sim}}$} & \multicolumn{2}{c|}{$B_{\%}$} & \multicolumn{2}{c|}{$\Gamma_{\mathrm{sim}}$} & \multicolumn{2}{c|}{$B_{\%}$} \\
%         \hline 
%         $\lambda$  & 0.25 & 0.75  & 0.25 & 0.75 & 0.25 & 0.75  & 0.25 & 0.75  & 0.25 & 0.75 & 0.25 & 0.75 \\
%         \hline 
%         Mixed bias  & 0.22 & 0.40 & 38.00 & 38.00 & 0.50 & 0.70 & 100.00 & 100.00 & 1.63 & 1.72 & 100.00 & 100.00 \\
%         \hline
%         Negative bias & 0.32 & 0.54 & 58.00 & 58.00 & 2.91 & 2.78 & 100.00 & 100.00 & 5.23 & 4.76 & 100.00 & 100.00 \\
%         \hline
%         Positive bias  & 0.12 & 0.26 & 22.00 & 22.00 & 0.12 & 0.26 & 68.75 & 68.75 & 0.18 & 0.34 & 100.00 & 100.00\\
%         \hline
%     \end{tabular}
% \end{table*}

Setting $\beta=10$ and imposing $R=10I$, we first compare the policies designed by solving \eqref{eq:MPC} in a receding horizon fashion for two different prediction horizons $L=5$ and $L=20$ and those obtained by tackling a similar problem, yet using the opinion dynamics model proposed in \cite{c26} for predictions and simulations. 

As clear from the values of the indicators reported in \tablename{~\ref{tab:comparison_c14}}, using the model proposed in \cite{c26} leads to more conservative policies when the considered horizon is longer. Indeed, for $L=20$, the MPC scheme relying on the model introduced in \cite{c26} is able to anticipate that the average inclination toward the technology/service will diminish when the available budget is exhausted, thus trying to slow the consumption of the budget. In turn, this results in a population that is, on average, less inclined to adopt the new technology/service compared to that achieved with the model and MPC strategy proposed in this work for the same $L$. Instead, when $L=5$ an MPC scheme relying on the model proposed in \cite{c26} suggests that the entire budget should be promptly used as the design strategy is \textit{short-sighted} with respect to the consequences of budget exhaustion (namely, that opinions will revert to their open-loop status). In contrast, the MPC strategy presented in Section~\ref{sec:MPC} maintains a performance that is consistent with that attained with a longer horizon, depleting the budget when needed depending on the initial agents' biases and ultimately achieving a more widespread positive inclination toward the technology/service of interest.

 % {\color{magenta}aggiungere riferimento [16]}
  
\subsection{Analyzing the impact of different budgets}
\begin{table*}[!tb]
    \caption{Social Benefit $\Gamma_{\mathrm{sim}}$ and budget consumption $B_{\%}$ for different budgets $\beta$.}
    \label{tab:budget_comparison}
    \centering
        \begin{tabular}{|c|c|c|c|c|c|c|c|c|c|c|c|c|}
        \cline{2-13}
        \multicolumn{1}{c|}{} & \multicolumn{4}{c|}{$\beta_\mathrm{high}$} & \multicolumn{4}{c|}{$\beta_\mathrm{mod}$}& \multicolumn{4}{c|}{$\beta_\mathrm{low}$}\\
        \cline{2-13}
        \multicolumn{1}{c|}{} & \multicolumn{2}{c|}{$\Gamma_{\mathrm{sim}}$} & \multicolumn{2}{c|}{$B_{\%}$} &\multicolumn{2}{c|}{$\Gamma_{\mathrm{sim}}$} & \multicolumn{2}{c|}{$B_{\%}$} & \multicolumn{2}{c|}{$\Gamma_{\mathrm{sim}}$} & \multicolumn{2}{c|}{$B_{\%}$} \\
        \hline 
        $\lambda$  & 0.25 & 0.75  & 0.25 & 0.75 & 0.25 & 0.75  & 0.25 & 0.75  & 0.25 & 0.75 & 0.25 & 0.75 \\
        \hline 
        Mixed bias  & 0.01 & 0.01 & 38.00 & 38.00 & 0.19 & 0.17 & 100.00 & 100.00 & 1.25 & 1.11 & 100.00 & 100.00 \\
        \hline
        Negative bias & 0.01 & 0.01 & 58.00 & 58.00 & 2.38 & 1.97 & 100.00 & 100.00 & 4.88 & 4.23 & 100.00 & 100.00 \\
        \hline
        Positive bias  & 0.01 & 0.01 & 22.00 & 22.00 & 0.01 & 0.01 & 68.75 & 68.75 & 0.05 & 0.04 & 100.00 & 100.00\\
        \hline
    \end{tabular}
\end{table*}
By considering the same penalty for the input effort introduced before, namely $R=10I$, we focus on the performance of the approach proposed in this paper for different budgets. In particular, we consider a scenario with a high budget $\beta_{\mathrm{high}}=25$, so that all available resources do not need to be depleted to achieve the widespread diffusion of the new technology/service, a setting with moderate budget $\beta_{\mathrm{mod}}=8$ (fully exhausted only for some of the scenarios we consider for the individual biases), and, lastly, a low budget $\beta_{\mathrm{low}}=5$ case, where all resources are depleted irrespective of individual biases.

\begin{figure}[!tb]
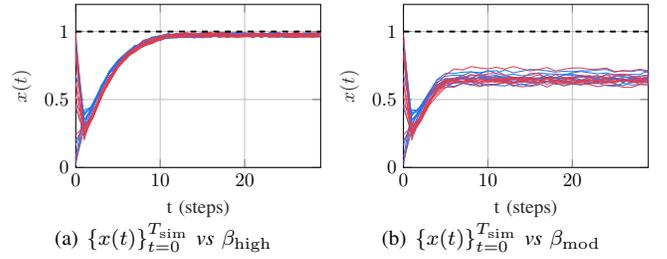

\centering
    \begin{tabular}{cc}
    \subfigure[$\{x(t)\}_{t=0}^{T_{\mathrm{sim}}}$ \emph{vs} $\beta_{\mathrm{high}}$]{\resizebox{0.485\linewidth}{!}{\input{Figures/fig3a}}} \hspace*{-.25cm}&\hspace*{-.25cm} \subfigure[$\{x(t)\}_{t=0}^{T_{\mathrm{sim}}}$ \emph{vs} $\beta_{\mathrm{mod}}$]{\resizebox{0.485\linewidth}{!}{\input{Figures/fig3b}}}  
    \end{tabular}\vspace{-.1cm}
    \caption{Negatively biased scenario: evolution of latent inclinations under different budget constraints.} \vspace{-.5cm}
    \label{fig:budgetcomp}
\end{figure}

As shown in \tablename{~\ref{tab:budget_comparison}}, the lower the available budget, the more individuals will be resistant to embrace the new technology on average. Moreover, the worst social benefit is achieved in the second scenario, as the population is negatively biased toward the technology/service. 
%In this setting and under a sufficient budget, having a lower $\lambda$ (and, hence, a population that is more susceptible to incentives) helps in achieving better levels of (average) acceptance and, hence, a higher social benefit. This consideration is nonetheless contradicted 
It can be observed that, when the budget is constrained, a higher social benefit is attained with $\lambda=0.75$. In this case, since the budget is limited, also social influence drives the promotion of the new technology/service, highlighting the importance of the interplay between social contagion and external policies in fostering innovation diffusion. Focusing on the negatively biased population with $\lambda=0.25$, \figurename{~\ref{fig:budgetcomp}} showcases the impact of budget restrictions on a realization of closed-loop inclinations. As expected, all individual inclinations approach $1$ when the budget is high (i.e., we have a near-universal acceptance of the technology), which is not the case when the available budget is moderate. Note that, in this scenario, the latent opinions at the end of the considered simulation horizon are nonetheless higher than the value achieved in open-loop (i.e., when no policy is enacted).

\subsection{Static \emph{vs} receding horizon policy}
%plot with noise
% \begin{figure}[!tb]
% \centering
%     \resizebox{0.8\linewidth}{!}{\input{Figures/fig4}}
%     \caption{Social benefit $\Gamma_{\mathrm{sim}}$ \emph{vs} cumulative cost $u_{\mathrm{sim}}^{\Sigma}$: constant \emph{vs} receding horizon policy. Each marker corresponds to a time step in our simulation.}\label{fig:comp_MPC_CP}
% \end{figure}

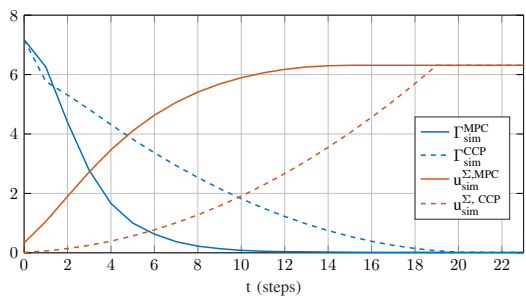
\begin{figure}[!tb]
\centering
    \resizebox{0.8\linewidth}{!}{% This file was created by matlab2tikz.
%
%The latest updates can be retrieved from
%  http://www.mathworks.com/matlabcentral/fileexchange/22022-matlab2tikz-matlab2tikz
%where you can also make suggestions and rate matlab2tikz.
%
\definecolor{mycolor1}{rgb}{0.00000,0.44706,0.74118}%
\definecolor{mycolor2}{rgb}{0.85098,0.32549,0.09804}%
\begin{tikzpicture}

\begin{axis}[%
width=10.5cm,
height=5cm,
scale only axis,
xmin=0,
xmax=23,
xlabel style={font=\color{white!15!black}},
xlabel={t (steps)},
separate axis lines,
every outer y axis line/.append style={black},
every y tick label/.append style={font=\color{black}},
every y tick/.append style={black},
ymin=0,
ymax=8,
%ylabel={$\text{u}^{\Sigma}_{\text{sim}}$},
%axis background/.style={fill=white},
%yticklabel pos=left,
xmajorgrids,
ymajorgrids,
legend style={at={(0.782,0.14)}, anchor=south west, legend cell align=left, align=left, draw=white!15!black}
]
\addplot [color=mycolor1, line width=0.9pt]
  table[row sep=crcr]{%
0	7.17773798099997\\
1	6.25699993757846\\
2	4.39445858434321\\
3	2.77442860723662\\
4	1.66340143013384\\
5	1.00173482017827\\
6	0.629780687489213\\
7	0.375724521718046\\
8	0.223933622984593\\
9	0.14120667393092\\
10	0.0833914572026239\\
11	0.0532332077777432\\
12	0.0394827102760707\\
13	0.0349126285670455\\
14	0.0264126783337395\\
15	0.0198044341072709\\
16	0.0150925681994494\\
17	0.0136836082684125\\
18	0.016910500655988\\
19	0.0171948281815211\\
20	0.0124086814168254\\
21	0.0146988177160382\\
22	0.0119752389110469\\
23	0.0168209014442346\\
};
\addlegendentry{$\Gamma{}^{\text{MPC}}_{\text{sim}}$}

\addplot [color=mycolor1, line width=0.9pt, dashed]
  table[row sep=crcr]{%
0	7.17773798099997\\
1	5.76550146471698\\
2	5.31274622630861\\
3	4.82717019117416\\
4	4.31454911265365\\
5	3.82675322422694\\
6	3.36692943004544\\
7	2.93645699790426\\
8	2.53542228717828\\
9	2.16385909095184\\
10	1.82176863906168\\
11	1.50915155582238\\
12	1.22600778625225\\
13	0.97233732645334\\
14	0.748140169370151\\
15	0.553416313604253\\
16	0.388165758649158\\
17	0.252388504395144\\
18	0.146084550812213\\
19	0.0692538978938342\\
20	0.021896545638454\\
21	0.0145735245146091\\
22	0.0130017619762933\\
23	0.012624357101664\\
};
\addlegendentry{$\Gamma{}^{\text{CCP}}_{\text{sim}}$}

\addplot [color=mycolor2, line width=0.9pt]
  table[row sep=crcr]{%
0	0.333746448764137\\
1	1.05657338689741\\
2	1.89959940808566\\
3	2.72468409817535\\
4	3.46768601760126\\
5	4.1048356415281\\
6	4.63451194030828\\
7	5.06558265176844\\
8	5.41093607778502\\
9	5.6836418695552\\
10	5.89565174919931\\
11	6.05729852315241\\
12	6.17554510864001\\
13	6.25509150533166\\
14	6.29787348101698\\
15	6.31224701794804\\
16	6.31249830813538\\
17	6.3124986792247\\
18	6.31249966655014\\
19	6.31249989686498\\
20	6.3124998298263\\
21	6.31249998008036\\
22	6.31249987047073\\
23	6.31249996651732\\
};
\addlegendentry{$\text{u}^{\Sigma\text{,MPC}}_{\text{sim}}$}

\addplot [color=mycolor2, line width=0.9pt, dashed]
  table[row sep=crcr]{%
0	0.0157812504617348\\
1	0.0631250018469393\\
2	0.142031254155613\\
3	0.252500007387757\\
4	0.39453126154337\\
5	0.568125016622454\\
6	0.773281272625006\\
7	1.01000002955103\\
8	1.27828128740052\\
9	1.57812504617348\\
10	1.90953130586991\\
11	2.27250006648981\\
12	2.66703132803318\\
13	3.09312509050002\\
14	3.55078135389033\\
15	4.04000011820411\\
16	4.56078138344136\\
17	5.11312514960208\\
18	5.69703141668627\\
19	6.31250018469393\\
20	6.31250018469393\\
21	6.31250018469393\\
22	6.31250018469393\\
23	6.31250018469393\\
};
\addlegendentry{$\text{u}^{\Sigma\text{, CCP}}_{\text{sim}}$}

\end{axis}

% \begin{axis}[%
% width=10.5cm,
% height=5cm,
% scale only axis,
% xmin=0,
% xmax=1,
% ymin=0,
% ymax=1,
% axis line style={draw=none},
% ticks=none,
% axis x line*=bottom,
% axis y line*=left
% ]
% \end{axis}
\end{tikzpicture}%}
    \caption{$\Gamma_{\mathrm{sim}}$ \emph{vs} $u_{\mathrm{sim}}^{\Sigma}$: constant \emph{vs} receding horizon policy.}\label{fig:comp_MPC_CP}
\end{figure}

Finally, we analyze the possible advantages of the receding horizon policy in \eqref{eq:MPC} compared to the constant policy \eqref{eq:CCP_opt} focusing on the mixed biases scenario, under a budget $\beta=10$ and imposing $R=15I$. Since the receding horizon policy exhausts the budget after $20$ time steps, we further set $T$ in \eqref{eq:CCP_opt} to $20$, while imposing $S=10$ to avoid waste of resources with the constant policy. 
As summarized in \figurename{~\ref{fig:comp_MPC_CP}}, within this setting, the MPC strategy uses more resources at the beginning than the CCP strategy, resulting in a greater social benefit in less time. The receding horizon strategy slightly outperforms the CCP one in terms of social benefit even at the end of the horizon, despite the cost of the two policies becoming aligned (and close to zero, as the whole budget is consumed).   

\section{Conclusions}\label{sec:conclusion}
In this work, we propose an opinion dynamics model that describes long-term shifts in opinion induced by external policies by introducing an artificial accumulation state, directly impacting
individual opinion dynamics. We then rely on the proposed model to introduce two strategies for policy design aimed at trading off a widespread adoption of a new (sustainable) technology/service and costs under budget constraints, whose impact on opinion dynamics is evaluated through numerical simulations. 

Future works will be devoted to blending models describing only short-term shifts in inclination with the proposed one, as well as analyzing the realism of such models and the validity of the proposed policy design strategies on real data.

\addtolength{\textheight}{-12cm}   % This command serves to balance the column lengths
                                  % on the last page of the document manually. It shortens
                                  % the textheight of the last page by a suitable amount.
                                  % This command does not take effect until the next page
                                  % so it should come on the page before the last. Make
                                  % sure that you do not shorten the textheight too much.

%%%%%%%%%%%%%%%%%%%%%%%%%%%%%%%%%%%%%%%%%%%%%%%%%%%%%%%%%%%%%%%%%%%%%%%%%%%%%%%%

\bibliographystyle{IEEEtran}
\bibliography{Innovation_long_shifts.bib}

\end{document}